\newtheorem{theorem}{Theorem}
\newtheorem{lemma}[theorem]{Lemma}
\newtheorem{proposition}{Proposition} 
\theoremstyle{definition}
\newtheorem*{assumption-definition}{Assumption/Definition}
\newtheorem{definition}{Definition}
\newtheorem{remark}{Remark}
\newtheorem{example}{Example}
\newcommand{\bbR}{\mathbb{R}}
\begin{document}

\title{Attraction of the core and the cohesion flow}

\author{Dylan Laplace Mermoud}
\address{Unit{\'e} de Math{\'e}matiques Appliqu{\'e}es, ENSTA, Institut Polytechnique de Paris, 91120 Palaiseau, France, CEDRIC, Conservatoire National des Arts et M{\'e}tiers, 75003, Paris}
\email{dylan.laplace.mermoud@gmail.com}
\thanks{The author wishes to thank the anonymous referees and editor for their valuable comments that greatly improved the presentations of the present results.} 

\date{\today}

\subjclass[2020]{MSC Primary 91A12; Secondary 91B24}
\keywords{Core, Stability, Cooperative Game Theory, Dynamical Systems}

\begin{abstract}
We adopt a continuous-time dynamical system approach to study the evolution of the state of a game driven by the willingness to reduce the total dissatisfaction of the coalitions about their payment. Inspired by the work of Grabisch and Sudh{\"o}lter about core stability, we define a vector field on the set of preimputations from which is defined, for any preimputation, a \emph{cohesion curve} describing the evolution of the state. We prove that for each preimputation, there exists a unique cohesion curve. Subsequently, we show that, for the cohesion flow of a balanced game, the core is the unique minimal attractor of the flow, the realm of which is the whole preimputation set. These results improve our understanding of the ubiquity of the core in the study of cooperative games with transferable utility. 

\end{abstract}

\maketitle

\section{Introduction}

One of the key solution concepts in cooperative game theory is the core. We can loosely define the core as the set of states of the game that satisfy every coalition. Thanks to its intuitive definition, the popularity of the core never decreased since its introduction as a solution concept by Shapley~\cite{shapley1953theory, shubik1992game, zhao2018three} and its study by Gillies~\cite{gillies1959solutions}. 

\medskip

Moreover, many single-valued solution concepts, such as the (pre)nucleolus or the Shapley value, are studied comparatively to the core. Indeed, one of the main qualities of the nucleolus is to always belong to the core when it is non-empty, and at a central position. For the Shapley value, it is said to be \emph{stable} when it belongs to the core, as it happens for convex games, making it very appealing as a one-point solution concept. 

\medskip 

Even though the core is very well appreciated for being the set of solutions from which deviations are not reasonable, relatively little is known about its capacity to attract states and to make them evolve towards it, except for one specific type of discrete dynamics. Such a dynamic approach in cooperative game theory was very popular in its early days, with the concept of stable sets, introduced by von Neumann and Morgenstern in the same work in which they define cooperative games~\cite{von1944theory}. Roughly speaking, a stable set is defined as a set of states such that no two states from this set are comparable, and any state outside the set can be improved into a state of the stable set, via a mechanism called domination. 

\medskip 

B{\'e}al, R{\'e}mila and Solal~\cite{beal2013optimal} have shown that at most \(n-1\) successive dominations were enough to reach the core of an \(n\)-player game, and Grabisch and Sudh{\"o}lter~\cite{grabisch2021characterization} characterized the games for which a unique improvement is sufficient. Stearns~\cite{stearns1968convergent} and Cesco~\cite{cesco1998convergent} have also studied discrete-time dynamics on cooperative games. Billera and Wu~\cite{billera1977dynamic} have developed a continuous-time dynamical system similar to ours, but relative to the kernel, another solution concept of cooperative game theory. 

\medskip 

To pursue the investigation of the attraction of the core, and to extend it to the continuous-time setting, we introduce an intuitive dynamical system. Its vector field is defined as a weighted sum of vectors, each of them representing the direction along which the payment of dissatisfied coalitions increases the most. Each of these vectors is weighted by the intensity of the dissatisfaction of the corresponding coalition.

\subsection{Contributions}

In this paper, we introduce two maps defined on the set of states of the game. The first one is a scalar field \(\vartheta\) called the \emph{dissatisfaction field} which associates to each state a value measuring its instability. The second one is a vector field \(\varphi\) called the \emph{cohesion field} which aggregates the transformations of the current state wished by the frustrated coalitions. These two maps are related by the formula: 
\[
\varphi = - \nabla \vartheta.
\]
Moreover, when the core of the game is non-empty, it is precisely the set of states on which the two aforementioned fields vanish. 

\medskip 

The second and most important contribution of the paper is the proof that the core is the unique minimal global attractor of the newly defined dynamical system. Being an attractor means that, for a given initial state of the game, if it evolves by following the cohesion flow, i.e., by following the aggregate direction wanted by the frustrated coalitions, it eventually reaches the core. The core is a global attractor, so it attracts every possible state of the game, no matter how far they initially are from the core. Moreover, it is the unique minimal one, hence it is the only actual attractor, every other attractors are so only because they contain the core. 

\subsection{Organization of the paper}

First, we recall some important concepts of cooperative game theory and dynamical systems in Section~\ref{sec: preliminaries}, such as the definition of the core and of an attractor. In Section~\ref{sec: dissatisfaction} we introduce the \emph{dissatisfaction field}, the map associating to each state of the game its level of instability. In this section, we also show that the dissatisfaction field is compatible with an important concept of domination-based dynamics, called the \emph{outvoting} relation. Section~\ref{sec: cohesion} contains the most important contributions of the paper, including the definition of the \emph{cohesion field}, its connection with the dissatisfaction field, and the exposition of some of the properties of the dynamical system it induces on the set of states of the game. We conclude with Section~\ref{sec: conclusion}, which also comprises perspectives of future works.


\section{Preliminaries}\label{sec: preliminaries}

\subsection{Cooperative games and their cores}

We start with the definition of cooperative games with transferable utility. 

\begin{definition}[\textcite{von1944theory}]\leavevmode \newline 
A \emph{cooperative game with transferable utility} is an ordered pair \((N, v)\) with 
\begin{itemize}
\item \(N\) is a finite nonempty set of \emph{players}; 
\item \(v: 2^N \setminus \{\emptyset\} \to \mathbb{R}\) is a set function called the \emph{coalition function}. 
\end{itemize}
The nonempty subsets of \(N\) are called \emph{coalitions}, and their set is denoted by \(\mathcal{N}\). 
\end{definition}

Without any loss of generality, we can assume that \((N, v)\) satisfies \(v(N) = 0\). Any game can be transformed into a game of this form, by defining \( v'(S) = v(S) - \frac{s}{n} v(N)\) for any coalition \(S \in \mathcal{N}\), with \(s = \lvert S \rvert\) and \(n = \lvert N \rvert\). The core of the new game can therefore be deduced from the original one by translation, and allows for a simpler manipulation of the geometry of the game. 

\medskip 

The payoff vectors that allocate the value of the grand coalition among the players are called \emph{premputations}, and their set is defined by 
\[
X \coloneqq \{x \in \mathbb{R}^N \mid x(N) = 0\},  
\]
with \(x(S) = \sum_{i \in S} x_i\). In the usual setting where \(v(\emptyset) = 0\) instead of \(v(N) = 0\), these vectors are called \emph{side payments}. According to the reader's preferences, \(X\) can be seen as the space of preimputations for \(v(N) = 0\), or of side payments for \(v(\emptyset) = 0\). 

\begin{remark}
The space \(X\) is a vector subspace of the whole allocation vector space \(\mathbb{R}^N\) of dimension \(n - 1\). Therefore, any sum of preimputations is a preimputation, and any preimputation multiplied by a real number is a preimputation, which is not the case when \(v(N) \neq 0\). 
\end{remark}

We denote by \(\mathbf{1}^S\) the indicator vector of \(S\) in \(N\), i.e., \(\mathbf{1}^S_i = 1\) if \(i \in S\), and \( \mathbf{1}^S_i = 0\) otherwise. We define the preimputation \(\eta^S\) as the orthogonal projection of \(\mathbf{1}^S\) onto the set \(X\), i.e., \(\eta^S = \mathbf{1}^S - \frac{s}{n} \mathbf{1}^N\), or, equivalently, 
\[
\eta^S_i = \begin{cases}
- \frac{s}{n}, & \quad \text{if } i \not \in S, \\
1 - \frac{s}{n}, & \quad \text{if } i \in S, 
\end{cases}
\]
Among the preimputations with the same norm, it is the unique one which maximises the payment of \(S\). Note that, for all preimputations \(x \in X\), we have \(x(S) = \langle x, \eta^S \rangle\), with \(\langle \cdot, \cdot \rangle\) denoting the standard scalar product. 

\medskip 

For any coalition \(S\), we define the map \(e_S: X \to \mathbb{R}\), called the \emph{excess function}, by
\[
e_S: x \mapsto e_S(x) \coloneqq v(S) - x(S). 
\] 
The excess of \(S\) at \(x\), denoted by \(e_S(x)\) represents the amount of additional utility that the coalition can acquire by defecting and working on its own. We denote by \(C(v)\) the set of preimputations for which the excess of any coalition is nonpositive, i.e., 
\[
C(v) \coloneqq \{x \in X \mid e_S(v) \leq 0, \forall S \in \mathcal{N}\}. 
\]
The \(C(v)\) is called the \emph{core} of the game. When the players of the game reach an agreement that belongs to the core, no coalition has an interest in defecting. Hence, the cooperation between them can be expected to last. For a given preimputation \(x \in X\), we denote by \(\phi(x) \subseteq \mathcal{N}\) the set of coalitions having a positive excess at \(x\), i.e., 
\[
\phi(x) \coloneqq \{ S \in \mathcal{N} \mid e_S(x) > 0\}. 
\]
A collection of coalitions is said to be \emph{feasible} (Grabisch and Sudh{\"olter}~\cite{grabisch2021characterization}) if there exists a preimputation \(x \in X\) such that \(S = \phi(x)\). The map \(\phi\) defines a equivalence relation \(\sim_\phi\) on the set of preimputations, defined by \(x \sim_\phi y \) if \(\phi(x) = \phi(y)\). The equivalence classes are called the \emph{regions} of \(X\), and there is only a finite number of them. The region associated with the empty collection of coalitions is the core.

\subsection{Differential geometry and dynamical systems}

The goal of this paper is to define a \emph{dynamical system} on the set \(X\) of preimputations. Roughly speaking, a dynamical system is represented by a map \(\Phi: I \times X \to X\), with \(I\) an interval \(I \subseteq \mathbb{R}\) representing the time during which the dynamical systems occurs. In our study, the map \(\Phi\) is defined via an \emph{initial value problem} composed of an ordinary differential equation and an initial condition: 
\[
\begin{cases}
\partial_t \Phi(t, x) & = f(\Phi(t, x)), \\
\Phi(0, x) & = x_0, 
\end{cases}
\]
where \(f: X \to X\) is a \emph{vector field}, associating with each preimputation \(x\) its velocity \(f(x)\), and \(x_0\) is the initial state of the game. 

\medskip 

The gradient of a function \(g : X \to \mathbb{R}\) at a given preimputation \(x \in X\), denoted by \(\nabla_x f\), is defined by 
\[
\langle \nabla_x g, h \rangle = \lim_{t \to 0} \frac{g(x + th) - g(x)}{t}, 
\]
with \(\langle \cdot, \cdot \rangle\) denoting the usual inner product in \(\mathbb{R}^n\). The gradient \(\nabla_x g\) represents the direction along which the function \(g\) has the steepest increase around \(x\). 

\medskip 

If the vector field \(f\) used in the definition of our dynamical system is itself the gradient of a scalar field \(g\), i.e., \( f = \nabla g \), we call \(g\) the \emph{potential} of \(f\). 

\medskip 

Let \(x\) be the initial state of the game. The map \(\gamma_x: \Phi( \cdot, x) : I \to X\) is called the \emph{flow} through \(x\). Using a slight abuse of notation, the set \(\gamma_x = \{\Phi(t, x) \mid t \in I\}\) is called the \emph{flow curve} through \(x\). If \(\Phi\) is determined as the solution of an initial value problem involving a function \(f\) having a potential \(- g\), then the flow \(\gamma_x\) represents the trajectory of \(x\) such that, at any time, the direction taken reduces as much as possible the value of \(g\). 

\medskip 

From all this data, it is possible to study where these flows lead, for all the possible initial conditions. For any imputation \(x \in X\), the set \(\gamma_x^+ = \bigcup_{t \geq 0} \gamma_x(t)\) is called the \emph{forward} semi-orbit of \(x\). The \emph{omega limit set} \(\Lambda^+(x)\) of the orbit \(\gamma_x\) is defined by 
\[
\Lambda^+(x) = \bigcap_{t \in \mathbb{R}} \gamma^+_{\gamma_x(t)}. 
\]
For \(Y \subseteq X\) and \(x \in X\), we write 
\[
\rho(Y, x) = \inf_{y \in Y} \lVert x - y \rVert, \qquad \text{and} \qquad B_r(Y) = \{x \in X \mid \rho(Y, x) < r\}. 
\]

\begin{definition}[Auslander, Bhatia, and Seibert~\cite{auslander1964attractors}]
Let \(Y\) be a nonempty compact subset of \(X\). The set \(Y\) is said to be an \emph{(asymptotically Lyapunov stable) attractor} if
\begin{enumerate}
\item\label{item: invariant} for all \(y \in Y\), we have \(\gamma_y \subseteq Y\), 
\item for all \(\varepsilon > 0\), there exists \(\delta > 0\) such that \(x \in B_\delta(Y)\) implies \(\gamma_x^+ \subseteq B_\varepsilon(Y)\), 
\item\label{item: omega-set} for some \(\delta > 0\), \(x \in B_\delta(Y)\) implies that \(\Lambda^+(x)\) is an nonempty subset of \(Y\). 
\end{enumerate}
A set satisfying the condition~\eqref{item: invariant} is said to be \emph{invariant}. 
\end{definition}

In the literature, the definitions of attractors may vary (see Milnor~\cite{milnor1985concept}), however, the one presented above is a common one. For a set \(Y\) to be an attractor, it first requires that the set is invariant, then that each preimputation close enough from \(Y\) evolves continuously in the neighbourhood of \(Y\), and finally that, eventually, the preimputation evolves into an element of \(Y\) and never leave it afterwards. 

\medskip 

The \emph{realm of attraction} of a set \(Y\) is the union of all orbits with the property that their omega limit sets are nonempty and contained in \(Y\) (see condition~\eqref{item: omega-set}). An attractor is \emph{global} if its realm of attraction is the whole domain of \(\Phi(t, \cdot)\).


\section{The dissatisfaction field as a measurement of instability}\label{sec: dissatisfaction}

In this paper, we are interested in preimputations that lie outside the core, that is, preimputations for which there exist coalitions that are dissatisfied with their assigned payments. These coalitions contribute to the total value \(v(N)\) created by the grand coalition, but receive less than what they could achieve on their own. By balancedness, if any coalition \(S \in \mathcal{N}\) leaves the grand coalition because its payment at \(x \in X\) is not sufficient, what is left to the complement of this coalition is less than its current payment \(x(N \setminus S)\). As a result, \(S\) has an incentive to leave or demand a higher payment, while the rest of the players want \(S\) to stay without compensating it further. This exhibits some instability of the game at the state \(x\) outside the core. In this section, we define a map on the set of preimputations that estimates this instability. 

\medskip 

The excess \(e_S(x)\) of a coalition \(S\) at a preimputation \(x\) represents how much utility the coalition is missing in order to be satisfied with the current state of the game. To analyze the simultaneous dissatisfaction of all coalitions, we unfold the preimputation \(x\) into a new vector \(z \coloneqq \left( z_S \right)_{S \in \mathcal{N}} \in \mathbb{R}^\mathcal{N}\), where each element of the basis of \(\mathbb{R}^\mathcal{N}\) corresponds to a coalition. Then, the deviation between \(z\) and the game seen as a vector in the space \(\mathbb{R}^\mathcal{N}\)  is given by \(\sum_{S \in \mathcal{N}} \left( v_S - z_S \right)^2 = \sum_{S \in \mathcal{N}} e_S(x)^2\). This leads us to define a map from \(X\) to \(\mathbb{R}\), which measures the total dissatisfaction at a given preimputation. We denote this function by \(\vartheta\), and define it as: 
\[
\begin{tabular}{lccl}
    \( \vartheta: \) & \( X \) & \( \longrightarrow \) & \( \bbR \) \\[0.1cm]
    & \( x \) & \( \longmapsto \) & \( \displaystyle \vartheta(x) \coloneqq \frac{1}{2} \sum_{S \in \phi(x)} e_S(x)^2 = \frac{1}{2} \sum_{S \in \mathcal{N}} (e_S(x)^+)^2 \)
  \end{tabular} 
\]
where \(t^+ \coloneqq \max \{0, t\}\). The function \(\vartheta\) maps each preimputation into a real number, hence is a \emph{scalar field}. Unlike the previous example of the distance between the game and the preimputation, we have here a one-half coefficient for convenience and the sum only contains terms corresponding to coalitions for which \(x\) is unsatisfactory. Therefore, the number \(\vartheta(x)\) represents the cumulative dissatisfaction at the state \(x\). 

\begin{example}
Let \((N, v)\) be the game defined on \(N = \{a, b, c\}\) by \(v(S) = - \frac{3}{2} \lvert S \rvert \lvert N \setminus S \rvert\). 

\begin{figure}[H]
\begin{center}
\includegraphics{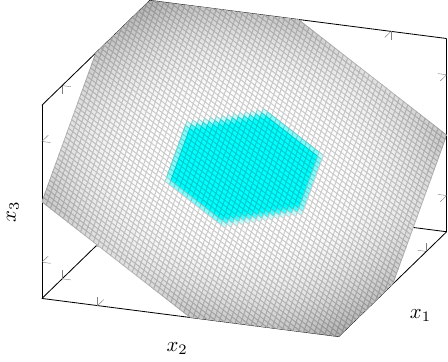}
\caption{Dissatisfaction field for the game \(v(S) = -\frac{3}{2} \lvert S \rvert \lvert N \setminus S \rvert\). The blue preimputations form the core. The black and white preimputations are outside the core, and the darkest are the ones having a larger cumulative dissatisfaction.}
\end{center}
\end{figure}

Let \(x = (-4, -3, 7)\) be a preimputation. The set of aggrieved coalitions \(\phi(x)\) is composed of \(\{a\}\) and \(\{a, b\}\), because 
\[
e_{\{a\}}(x) = -3 - (-4) = 1 > 0, \qquad \text{and} \qquad e_{\{a, b\}}(x) = -3 - (-7) = 4 > 0. 
\]
The dissatisfaction \(\vartheta (x) \) of the preimputation \(x\) is given by \(\vartheta(x) = \frac{1}{2} (1^2 + 4^2) = 8.5\). 
\end{example}

To highlight the relevance of this dissatisfaction field for the study of the dynamics of cooperative games and core stability, we demonstrate that the dissatisfaction field is compatible with a transitive refinement of the von Neumann-Morgenstern domination relation~\cite{von1944theory}, called \emph{outvoting}~\cite{grabisch2021characterization}. A preimputation \(x\) outvotes another preimputation \(y\), written \(x \succ y\), if there exists a coalition \(S \in \mathcal{N}\) such that 
\begin{itemize}
\item \(x\) dominates \(y\) via \(S\), i.e., \(e_S(x) \geq 0\) and \(x_i > y_i\) for all \(i \in S\), and
\item \(x(T) \geq v(T)\), for all coalitions \(T\) not contained in \(S\). 
\end{itemize}

If the current state \(y\) of the game is outvoted by a preimputation \(x\) via \(S\), the players in \(S\) will demand to evolve the state towards \(x\) by threatening to defect. Moreover, no players outside \(S\) can form a coalition with a positive excess at \(x\), i.e., there are no reasonable objections to \(x\). Naturally, the state \(x\) shows more stability than \(y\). 

\begin{proposition}
	Let \(x, y\) be two preimputations such that \(x \succ y\). Then \(\vartheta(x) \leq \vartheta(y)\).
\end{proposition}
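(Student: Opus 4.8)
The plan is to compare $\vartheta(x)$ and $\vartheta(y)$ coalition by coalition, using the representation $\vartheta(x) = \tfrac{1}{2}\sum_{T \in \mathcal{N}} (e_T(x)^+)^2$. Fix a coalition $S$ witnessing $x \succ y$, and partition $\mathcal{N}$ into the coalitions contained in $S$ and those not contained in $S$. I expect the two defining clauses of outvoting to control these two families separately: the ``no reasonable objection'' clause handles the coalitions $T \not\subseteq S$, and the domination-via-$S$ clause handles the coalitions $T \subseteq S$. If each term of $\vartheta(x)$ is bounded by the corresponding term of $\vartheta(y)$, summing finishes the proof.

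First I would treat a coalition $T \in \mathcal{N}$ with $T \not\subseteq S$. The second clause of outvoting gives $x(T) \geq v(T)$, i.e. $e_T(x) = v(T) - x(T) \leq 0$, so $e_T(x)^+ = 0$ and trivially $(e_T(x)^+)^2 = 0 \leq (e_T(y)^+)^2$. Next I would treat $T \in \mathcal{N}$ with $T \subseteq S$. Every $i \in T$ then lies in $S$, so $x_i > y_i$ by the domination clause; summing over $i \in T$ yields $x(T) > y(T)$, hence $e_T(x) = v(T) - x(T) < v(T) - y(T) = e_T(y)$. Since $t \mapsto t^+$ is nondecreasing, $e_T(x)^+ \leq e_T(y)^+$, and therefore $(e_T(x)^+)^2 \leq (e_T(y)^+)^2$.

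Combining the two cases, $(e_T(x)^+)^2 \leq (e_T(y)^+)^2$ holds for every $T \in \mathcal{N}$; summing over $\mathcal{N}$ and multiplying by $\tfrac{1}{2}$ gives $\vartheta(x) \leq \vartheta(y)$. The only genuine subtlety — the ``main obstacle,'' such as it is — is recognizing at the outset which of the two outvoting conditions governs which family of coalitions; once the partition of $\mathcal{N}$ by inclusion in $S$ is made, each piece is immediate. I would also note, as a remark, that the hypothesis $e_S(x) \geq 0$ from the definition of domination is not actually needed in this argument.
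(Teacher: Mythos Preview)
Your proof is correct and follows essentially the same route as the paper: fix a witnessing coalition \(S\), split \(\mathcal{N}\) according to inclusion in \(S\), and use the two clauses of outvoting on the two families. Your term-by-term comparison via \((e_T(\cdot)^+)^2\) is in fact slightly cleaner than the paper's phrasing through \(\phi(x)\subseteq\phi(y)\), and your closing remark that the affordability condition \(e_S(x)\geq 0\) is unused is accurate.
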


\begin{proof}
First, we prove that there are no new summands in the sum defining \(\vartheta(x)\). Let denote by \(S\) a coalition such that \(x \succ_S y\). Because \(x\) is outvoting, it satisfies \(x(T) \geq v(T)\) for all coalitions \(T\) not contained in \(S\). So, they cannot be new coalitions to sum over that are not contained in \(S\), as \(\phi(x) \subseteq S\). Moreover, for each player \(i \in S\), we have that \(x_i > y_i\), hence \(x(T) \geq y(T)\) for every subcoalition \(T \subseteq S\), which implies that the excesses at \(x\) and \(y\) satisfy \(0 \leq e_T(x) < e_T(y)\), with the non-negativity coming from the affordability of the domination relation. So, \(\phi(x) \subseteq \phi(y)\). Because all the excesses at \(x\) are lower than at \(y\), we have \(\vartheta(x) \leq \vartheta(y)\). 
\end{proof}

Because we sum over several coalitions, it can seem like we are taking into account several times the discontentment of the same player if she belongs to multiple coalitions. However, in our context, it is the coalitions that are responsible for the dynamics, hence only the discontentment of the coalitions is relevant. Indeed, each aggrieved coalition can impose a new preimputation via domination or outvoting, and their motivation to do so increases with their excess. Therefore, \(\vartheta(x)\) measures the instability of the state \(x\): the larger it gets, the more coalitions are willing to renegotiate or to defect. 

\medskip 

Moreover, when the core is nonempty, it coincides with the set of minimizers of \(\vartheta\). 

\begin{proposition}\label{prop: alt-def-core}
	For any game \((N, v)\), we have \(C(v) = \{x \in X \mid \vartheta(x) = 0\}\). 
\end{proposition}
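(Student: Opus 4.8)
The plan is simply to unwind the two definitions and observe that they delimit the same set. Recall that $\vartheta(x) = \frac{1}{2} \sum_{S \in \mathcal{N}} (e_S(x)^+)^2$ is a \emph{finite} sum (the collection $\mathcal{N}$ of coalitions is finite) of nonnegative reals, each summand being the square of $e_S(x)^+ = \max\{0, e_S(x)\} \geq 0$. So the whole argument will rest on the elementary fact that a finite sum of nonnegative numbers vanishes if and only if each term vanishes.

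First I would prove the inclusion $C(v) \subseteq \{x \in X \mid \vartheta(x) = 0\}$. If $x \in C(v)$, then by definition $e_S(x) \leq 0$ for every coalition $S \in \mathcal{N}$, hence $e_S(x)^+ = 0$ for every such $S$, so every summand of $\vartheta(x)$ is zero and therefore $\vartheta(x) = 0$.

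Conversely, I would prove $\{x \in X \mid \vartheta(x) = 0\} \subseteq C(v)$. Suppose $\vartheta(x) = 0$. Since $\vartheta(x)$ is a finite sum of nonnegative terms equal to zero, each term must be zero, i.e. $(e_S(x)^+)^2 = 0$, hence $e_S(x)^+ = 0$, hence $e_S(x) \leq 0$, for every $S \in \mathcal{N}$. By the definition of the core this means $x \in C(v)$. Combining the two inclusions yields the claimed equality.

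There is no genuine obstacle here; the statement is immediate once the definition of $\vartheta$ is expanded. The only points worth stating carefully are that the index set $\mathcal{N}$ is finite, so the ``sum of nonnegatives is zero implies each is zero'' step needs no limiting or convergence argument, and that the notation $t^+ = \max\{0,t\}$ makes the equivalence $e_S(x)^+ = 0 \iff e_S(x) \leq 0$ transparent. (One may note in passing that this holds for \emph{any} game, whether or not the core is nonempty: if $C(v) = \emptyset$ then $\vartheta$ simply has no zero, which is consistent with the statement.)
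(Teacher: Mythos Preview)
Your proof is correct and is essentially the same as the paper's: both directions follow immediately from unwinding the definitions of $C(v)$ and $\vartheta$. The only cosmetic difference is that the paper proves the reverse inclusion by contrapositive (if $y \notin C(v)$ then some $e_S(y) > 0$, whence $\vartheta(y) \geq \tfrac{1}{2} e_S(y)^2 > 0$), whereas you argue it directly.
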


\begin{proof}
	Let \(x \in C(v)\). We have \(\phi(x) = \emptyset\), then \(\vartheta(x) = 0\). Let \(y \in X \setminus C(v)\). There exists a coalition \(S\) such that \(e_S(y) > 0\). Then, \(\vartheta(y) \geq \frac{1}{2} e_S(y)^2 > 0\). 
\end{proof}

The property of the core reflected in Proposition~\ref{prop: alt-def-core} is probably the reason of its popularity, and is equivalent to coalitional rationality. It is commonly interpreted that, if the players agree on a core preimputation, they stay committed to it. In this sense, numerous author call core allocations \emph{stable}, especially in the Shapley value literature. This behaviour was summarized by Shubik~\cite{shubik1982game} as follows: ``a game that has a core has less potential for social conflict than one without a core''.

\medskip 

Nonetheless, the power of attraction of the core is rarely discussed. When the state of the game initially belongs to the core, we expect it not to change. However, what happens when the initial state lies outside the core is less studied, at least for continuous-time dynamics. For discrete time steps, this study is often performed by analyzing the interaction between the core and the von Neumann-Morgenstern stable sets. One major result in this literature is the characterization of the games for which the core is a stable set by Grabisch and Sudh{\"o}lter~\cite{grabisch2021characterization}.


\section{The cohesion field and its flow}\label{sec: cohesion}

One difficulty for the core to attract the state of the game is that the different frustrated coalitions may not have enough common interest to find a direction along which the state should evolve to simultaneously reduce all excesses. Hence, each coalition tries to push the state into contradictory directions, leading to a status quo. The following definition introduces a tool to capture and summarize all these considerations. 

\begin{definition}
The \emph{cohesion field} of the game \((N, v)\) is defined by 
\[
\begin{tabular}{lccl}
    \( \varphi: \) & \( X \) & \( \longrightarrow \) & \( X \) \\[0.1cm]
    & \( x \) & \( \longmapsto \) & \( \displaystyle \varphi(x) \coloneqq \sum_{S \in \phi(x)} e_S(x) \eta_S \).
  \end{tabular} 
\]
\end{definition}

The cohesion field associates to any preimputation an \(n\)-dimensional preimputation, hence is a \emph{vector field}. This vector is the aggregation of all the directions that the frustrated coalitions wish to take.

\begin{example}[Continued]
Let \((N, v)\) be the game of the previous example, with 
\[
N = \{a, b, c\}, \qquad \text{and} \qquad v(S) = - \frac{3}{2} \lvert S \rvert \lvert N \setminus S \rvert. 
\]
	
\begin{figure}[H]
\begin{center}
\includegraphics{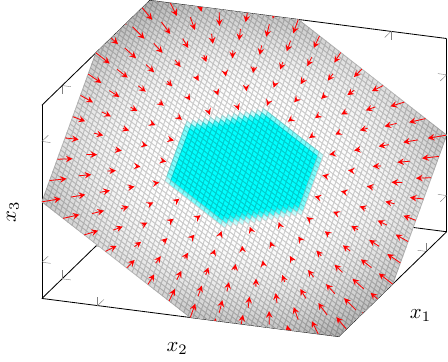}
\caption{Cohesion field for the game \(v(S) = -\frac{3}{2} \lvert S \rvert \lvert N \setminus S \rvert\).}
\end{center}
\end{figure}

Let also \(x = (-4, -3, 7)\) be as before. The vector associated with it is given by 
\[
\varphi(x) = \eta_a + 4 \eta_{ab} = \begin{pmatrix} 2 & 1 & -3 \end{pmatrix}^\top.  
\]
The largest and negative change is for the player \(c\), which could already be seen from the payment of \(c\) at the preimputation \(x\), as the game is symmetric. 
\end{example}

Let us make a few observations about the regularity of the cohesion flow. Both the maps \(e_S\) and \((\cdot )^+: t \mapsto t^+ = \max\{0, t\}\) are continuous, so \(\varphi\) is continuous as well. Moreover, the cohesion field is piecewise affine. Indeed, \(\varphi\) is affine on each region of \(X\) composed of preimputations with the same image through \(\phi\). We can rewrite 
\[
\varphi(x) = \sum_{S \in \phi(x)} e_S(x) \eta^S = \sum_{S \in \phi(x)} \eta^S \left( - \left(\eta^S\right)^\top x + v(S) \right) = - A x + b, 
\]
with \(A = \sum_{S \in \phi(x)} \eta^S \left( \eta^S \right)^\top\) and \(b = \sum_{S \in \phi(x)} v(S) \eta^S\). 

\medskip 

So far, we have defined two new maps, one returning real numbers, the other one returning preimputations, capturing different but related quantities: 
\begin{itemize}
\item the \emph{dissatisfaction field} \(\vartheta\), associating to any preimputation its level of dissatisfaction, and so instability;
\item the \emph{cohesion field} \(\varphi\), associating to any preimputation a vector, which direction and magnitude indicate the aggregated goal of the coalitions aggrieved by the current state of the game. 
\end{itemize}

In fact, we actually have that the dissatisfaction field is the opposite of the \emph{potential} of the cohesion field, i.e, we have 
\[
\varphi = - \nabla \vartheta. 
\]
So, the cohesion field associates with each preimputation the direction along which its instability decreases the most. Moreover, like the dissatisfaction field, \(\varphi\) characterizes the core, and provides new insights about his properties as a solution concept. 

\begin{proposition}\label{prop: core-invariant}
For any game \((N, v)\), we have \(C(v) = \{x \in X \mid \varphi(x) = 0\}\). 
\end{proposition}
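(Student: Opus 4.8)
The plan is to prove the two inclusions separately; the inclusion $C(v)\subseteq\{x\in X\mid\varphi(x)=0\}$ is immediate and the reverse inclusion carries all the content. For the easy direction, if $x\in C(v)$ then $e_S(x)\le 0$ for every $S\in\mathcal{N}$, so $\phi(x)=\emptyset$ and $\varphi(x)=\sum_{S\in\phi(x)}e_S(x)\eta^S$ is an empty sum, hence $\varphi(x)=0$. This mirrors exactly the easy half of Proposition~\ref{prop: alt-def-core}, and (given the identity $\varphi=-\nabla\vartheta$) it also follows from it, since $\vartheta(x)=0$ and $\vartheta\ge 0$ force $\nabla\vartheta(x)=0$.

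For the reverse inclusion I would argue by contradiction: assume $\varphi(x)=0$ but $x\notin C(v)$, so $\phi(x)\ne\emptyset$ and $\lambda_S:=e_S(x)>0$ for each $S\in\phi(x)$. Substituting $\eta^S=\mathbf{1}^S-\tfrac{s}{n}\mathbf{1}^N$ into $0=\varphi(x)=\sum_{S\in\phi(x)}\lambda_S\eta^S$ and collecting terms yields
\[
\sum_{S\in\phi(x)}\lambda_S\,\mathbf{1}^S=\mu\,\mathbf{1}^N,\qquad \mu:=\tfrac1n\sum_{S\in\phi(x)}\lvert S\rvert\,\lambda_S>0,
\]
so the positive weights $\delta_S:=\lambda_S/\mu$ exhibit $\phi(x)$ as a balanced collection. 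Since $x$ is a preimputation, pairing $\sum_{S\in\phi(x)}\delta_S\mathbf{1}^S=\mathbf{1}^N$ with $x$ gives $0=x(N)=\sum_{S\in\phi(x)}\delta_S\,x(S)$; substituting $x(S)=v(S)-\lambda_S$ and rearranging produces $\sum_{S\in\phi(x)}\delta_S v(S)=\sum_{S\in\phi(x)}\delta_S\lambda_S=\tfrac1\mu\sum_{S\in\phi(x)}\lambda_S^2>0=v(N)$. Thus $\phi(x)$ is a balanced collection violating the Bondareva–Shapley inequality, which contradicts the balancedness of the game; hence $\phi(x)=\emptyset$, i.e. $x\in C(v)$.

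I expect the reverse inclusion to be the only real obstacle, and within it the decisive move is the algebraic identification of $\phi(x)$ as a balanced collection with strictly positive weights — once that is in hand, the preimputation constraint $x(N)=0$ and Bondareva–Shapley finish the argument essentially for free. A more conceptual alternative, if one prefers, is to use $\varphi=-\nabla\vartheta$ directly: $\vartheta$ is a convex, piecewise‑quadratic $C^1$ function that is coercive on $X$ (along any nonzero direction $d\in X$ one has $d_i\ge 0$ incompatible with $d(N)=0$ unless $d=0$, so $\vartheta\to\infty$), hence $\{x:\varphi(x)=0\}=\{x:\nabla\vartheta(x)=0\}=\arg\min_X\vartheta$, and by Proposition~\ref{prop: alt-def-core} this coincides with $\vartheta^{-1}(0)=C(v)$ exactly when the minimum value is $0$ — which is where balancedness of the game is used, just as in the direct proof.
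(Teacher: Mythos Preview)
Your proof is correct, and the direct argument follows the same structural line as the paper's: both show that $\varphi(x)=0$ with $\phi(x)\ne\emptyset$ forces $\phi(x)$ to be a balanced collection (the paper via its Proposition on $\eta^S$, you by expanding $\eta^S=\mathbf{1}^S-\tfrac{s}{n}\mathbf{1}^N$), and then obtain a contradiction from balancedness of the game. The difference is in that last step: the paper quotes a lemma of Grabisch--Sudh\"olter asserting that $\phi(x)$ is unbalanced whenever $x\notin C(v)$ in a balanced game, whereas you reprove this inline by feeding the \emph{specific} weights $\delta_S=e_S(x)/\mu$ into the Bondareva--Shapley inequality to get $\sum_S\delta_S v(S)=\tfrac{1}{\mu}\sum_S e_S(x)^2>0=v(N)$. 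Your version is therefore self-contained where the paper relies on an external citation. Your alternative convexity route (coercive convex $C^1$ potential, so critical points are global minimizers) is a genuinely different and clean argument that the paper does not give. Finally, you are right to single out balancedness as the crux: despite the wording ``for any game'', the proposition fails for unbalanced games (by your coercivity argument $\vartheta$ always has a minimizer, at which $\varphi=0$, yet the core is empty), and the paper's own proof tacitly assumes balancedness through the Grabisch--Sudh\"olter lemma.
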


\begin{proof}
In Appendix~\ref{proof: core-invariant}. 
\end{proof}

In particular, this results implies that, when the initial state of the game belongs to the core, it does not move according to the cohesion dynamics. Moreover, if one initial state evolves from outside the core onto its boundary, it becomes static. 

\medskip 

As we have previously discussed, the field \(\varphi\) associates to any preimputation \(x \in X\) the direction \(\varphi(x)\) along which the decrease of the dissatisfaction field \(\vartheta\) is the greatest. However, as soon as we move along that direction, the new optimal direction may not be the same. The evolution of the state of the game is given by the solution of a specific differential equation, which fully describes our dynamical system. 

\begin{definition}
The \emph{cohesion curve} of a given preimputation \(x \in X\) is a continuous curve \(\gamma_x: \mathbb{R} \to X\) which satisfies 
\[
\partial_t \gamma_x(t) = \varphi(\gamma_x(t)) \qquad \text{and} \qquad \gamma_x(0) = x. 
\]
The map \(\Phi: \mathbb{R} \times X \to X\) such that \(\Phi(t, x) = \gamma_x(t)\) is called the \emph{evolution map}. 
\end{definition}

For simplicity, we write \(\gamma_x = \bigcup_{t \in \mathbb{R}} \gamma_x(t)\), which we also call the cohesion curve. A cohesion curve of a preimputation \(x \in X\), if it exists, represents the evolution of the state of players' wealth driven by the dissatisfaction of the coalitions for which the payment at the current state \(\gamma_x(t)\) at time \(t\) is lower than their worth. It is the natural path followed by the state of the game during the renegociation in progress among all the players in \(N\), if their objective is to reduce the dissatisfaction \(\vartheta\). 

\begin{theorem}\label{th: unicity-cohesion-curve}
For any preimputation, there exists a unique cohesion curve. 
\end{theorem}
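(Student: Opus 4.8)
The plan is to invoke the Picard--Lindel\"of / Cauchy--Lipschitz theorem, for which it suffices to show that the cohesion field $\varphi$ is globally Lipschitz on $X$; global existence then follows because a globally Lipschitz field has at most linear growth, so no solution can blow up in finite time, and the solution extends to all of $\mathbb{R}$. Uniqueness is the standard consequence of the Lipschitz estimate via Gr\"onwall's inequality.

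First I would record the decomposition already displayed in the excerpt: on each region $R$ (an equivalence class of $\sim_\phi$, on which $\phi$ is constant equal to some feasible collection $\mathcal{S}$), we have $\varphi(x) = -A_{\mathcal{S}}\,x + b_{\mathcal{S}}$ with $A_{\mathcal{S}} = \sum_{S \in \mathcal{S}} \eta^S (\eta^S)^\top$ and $b_{\mathcal{S}} = \sum_{S \in \mathcal{S}} v(S)\,\eta^S$. Thus $\varphi$ is affine, hence Lipschitz, on each of the finitely many regions, with Lipschitz constant bounded by $\max_{\mathcal{S}} \|A_{\mathcal{S}}\| \le \sum_{S \in \mathcal{N}} \|\eta^S\|^2 =: L$, a constant depending only on $(N,v)$. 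The second step is to promote these piecewise bounds to a single global Lipschitz bound. Here I would use that $\varphi$ is continuous on all of $X$ (already noted in the excerpt, since $e_S$ and $t \mapsto t^+$ are continuous): a continuous function that is Lipschitz with constant $L$ on each piece of a locally finite closed polyhedral decomposition of the convex set $X$ is Lipschitz with constant $L$ on $X$. Concretely, for $x, y \in X$, parametrize the segment $[x,y]$, which meets only finitely many regions, split it at the finitely many crossing points $x = p_0, p_1, \dots, p_k = y$, apply the per-region estimate $\|\varphi(p_{j}) - \varphi(p_{j+1})\| \le L\|p_{j} - p_{j+1}\|$ on each subsegment (legitimate because $\varphi$ agrees on the closed region with its affine extension, by continuity), and sum using collinearity: $\sum_j \|p_j - p_{j+1}\| = \|x - y\|$.

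With $\varphi$ globally $L$-Lipschitz on the finite-dimensional space $X \cong \mathbb{R}^{n-1}$, the Picard--Lindel\"of theorem gives, for every $x \in X$, a unique maximal solution $\gamma_x$ of $\partial_t \gamma_x(t) = \varphi(\gamma_x(t))$, $\gamma_x(0) = x$; and the linear-growth bound $\|\varphi(z)\| \le \|\varphi(0)\| + L\|z\|$ forces the maximal interval of existence to be all of $\mathbb{R}$ (Gr\"onwall bounds $\|\gamma_x(t)\|$ on any finite time interval, precluding finite-time escape). This yields existence and uniqueness of the cohesion curve.

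The main obstacle is the gluing step: one must be careful that $\varphi$, though only piecewise affine, is Lipschitz \emph{across} region boundaries with a uniform constant, rather than merely on each region separately --- a piecewise-Lipschitz function need not be Lipschitz in general, and what saves us is precisely the global continuity of $\varphi$ together with the fact that the regions form a finite polyhedral partition of the convex set $X$, so that any segment crosses finitely many of them and the one-sided affine estimates patch together by collinearity. A minor point to dispatch along the way is that the boundary hyperplanes $\{x : e_S(x) = 0\}$ are finite in number, so a generic (and in fact any) segment meets only finitely many regions; if one wishes to avoid even discussing segments tangent to a boundary hyperplane, one can instead argue directly from $\|\varphi(x) - \varphi(y)\|^2 = \|\sum_{S}(e_S(x)^+ - e_S(y)^+)\eta^S\|^2$, bound each $|e_S(x)^+ - e_S(y)^+| \le |e_S(x) - e_S(y)| = |\langle \eta^S, x - y\rangle|$ using that $t \mapsto t^+$ is $1$-Lipschitz, and conclude by the triangle inequality and Cauchy--Schwarz --- this sidesteps the decomposition entirely and may in fact be the cleanest route.
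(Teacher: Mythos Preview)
Your proposal is correct and follows essentially the same strategy as the paper: establish that $\varphi$ is globally Lipschitz on $X$ (from its piecewise-affine structure on the finitely many regions) and then invoke the global Cauchy--Lipschitz theorem. You are in fact more careful than the paper on the gluing step --- the paper simply asserts that the maximum of the per-region gradient norms serves as a global Lipschitz constant, whereas you correctly flag that continuity across region boundaries is what makes the segment argument go through, and your alternative direct bound via the $1$-Lipschitz property of $t \mapsto t^+$ is cleaner than either version.
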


\begin{proof}
In Appendix~\ref{proof: unicity-cohesion-curve}. 
\end{proof}

A corollary from this result is that the cohesion curves do not `cross'. Hence, either two preimputations are on the same cohesion curve, i.e., one of them evolves into the other after some time, or they follow completely distinct paths. 

\medskip

For any state of the game represented by a preimputation, the cohesion field pushes the state in a unique direction, determined by the set \(\phi(x)\) of coalitions that are discontent with the current outcome. The strength of the flow is determined by the magnitude of the excesses endured by these coalitions. 

\medskip 

We are now interested in knowing where this flow leads, and whether there exists a preimputation or a set of preimputations towards which all cohesion curves converge, i.e., is there one or several attractor for this dynamical system. 

\medskip 

A natural candidate to be an attractor is the core of the game. The dissatisfaction field as well as the cohesion field vanish on the core. However, for now nothing prohibits the cohesion curve to cycle, and, while never stopping, could avoid the core forever. A similar issue happens in the study of the von Neumann-Morgenstern stability of the core, with the core containing all the undominated preimputations but failing to dominate the whole space. Indeed, the relation of domination being non-transitive, cycles of domination can occur. 

\medskip 

To investigate the cyclicality of the cohesion curves, we need to be able to differentiate along flow curves of a vector field. This generalizes the concept of directional derivative, in the sense that if the vector field \(f\) is constant, then the differentiation along \(f\) coincides with the derivative in the direction of \(u\).

\begin{definition}
Let \(g: X \to \mathbb{R}\) be a function, and let \(f: X \to X\) be a vector field. The \emph{Lie derivative} of \(g\) with respect to \(f\) is the scalar field
\[
\begin{tabular}{lccl}
    \( \mathcal{L}_f g: \) & \( X \) & \( \longrightarrow \) & \( \mathbb{R} \) \\[0.1cm]
    & \( x \) & \( \longmapsto \) & \( \displaystyle \mathcal{L}_f g(x) \coloneqq \langle \nabla_x g, f(x) \rangle \).
  \end{tabular} 
\]
\end{definition}

The Lie derivatives are useful to compute the evolution of certain quantities, measured by a function \(g\), when the state of the game evolves according to the field \(f\). The Lie derivative of \(g\) with respect to \(f\) at \(x\) is the derivative of \(g\) at \(x\) in the direction of \(f(x)\). Another powerful tool for the study of attractors are the Lyapunov functions. 

\begin{definition}[{Lyapunov~\cite{lyapunov1907probleme}}]\label{def: lyapunov}
	Let \(f\) be a vector field on \(X\), let \(Y\) be a closed invariant set of \(f\) and let \(E \supseteq Y\). A continuously differentiable function \(V: E \to \mathbb{R}\) is called a \emph{strict Lyapunov function} for \(f\) at \(Y\) is 
	\begin{enumerate}
	\item for each \(y \in Y\), we have \(V(y) = 0\) and for all \(x \in E \setminus Y\), we have \(V(x) > 0\) ;
	\item for each \(x \in E \setminus Y\), we have \(\mathcal{L}_f V(x) < 0\). 
	\end{enumerate}
\end{definition}

A Lyapunov function can be thought of as a function associating to each state the energy of a physical system studied with these equations. Naturally, the system evolves towards the states with the lowest energy, corresponding to the negativity of the Lie derivative of the Lyapunov function along the flow. 

\medskip 

In the social science context of this paper, the quantity represented by a Lyapunov function for the cohesion field \(\varphi\) at the core would indicate how much the current state fails at being in the core. Therefore, a very natural candidate as a Lyapunov function for \(\varphi\) at the core is the dissatisfaction field. 

\begin{lemma}\label{lemma: dissatisfaction-lyapunov}
For a balanced game, the dissatisfaction field is a strict Lyapunov function for the bargaining field at the core whose domain is the whole space of preimputations. 
\end{lemma}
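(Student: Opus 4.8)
The plan is to verify the two conditions of Definition~\ref{def: lyapunov} for $V = \vartheta$ with $Y = C(v)$ and $E = X$, using the already-established facts that $\vartheta$ is continuously differentiable with $\varphi = -\nabla\vartheta$, that $C(v) = \{x \in X \mid \vartheta(x) = 0\}$ (Proposition~\ref{prop: alt-def-core}), and that $C(v) = \{x \in X \mid \varphi(x) = 0\}$ (Proposition~\ref{prop: core-invariant}); balancedness of the game is what guarantees $C(v) \neq \emptyset$, so that $Y$ is a nonempty closed invariant set and the statement is non-vacuous. The first condition is essentially immediate: $\vartheta \geq 0$ everywhere by its definition as a sum of squares, $\vartheta$ vanishes exactly on $C(v)$ by Proposition~\ref{prop: alt-def-core}, hence $\vartheta(x) > 0$ for every $x \in X \setminus C(v)$. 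I should also note that $C(v)$ is indeed a closed invariant set of $\varphi$: closedness follows from $C(v)$ being an intersection of the closed half-spaces $\{e_S \leq 0\}$, and invariance follows from Proposition~\ref{prop: core-invariant}, since on $C(v)$ the vector field vanishes and the constant curve $\gamma_x \equiv x$ solves the defining ODE.

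For the second condition I would compute the Lie derivative directly from the definition: for $x \in X \setminus C(v)$,
\[
\mathcal{L}_\varphi \vartheta(x) = \langle \nabla_x \vartheta, \varphi(x) \rangle = \langle -\varphi(x), \varphi(x) \rangle = -\lVert \varphi(x) \rVert^2,
\]
using $\varphi = -\nabla\vartheta$. Since $x \notin C(v)$, Proposition~\ref{prop: core-invariant} gives $\varphi(x) \neq 0$, hence $\mathcal{L}_\varphi \vartheta(x) = -\lVert\varphi(x)\rVert^2 < 0$. This closes the argument, and the domain of $V$ is all of $X$, as claimed.

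The only genuine subtlety — the step I expect to need the most care — is the differentiability of $\vartheta$ and the identity $\varphi = -\nabla\vartheta$ at points lying on the boundary between two regions, where some excess $e_S$ changes sign and the summand $(e_S^+)^2$ switches on or off. Away from such boundaries $\vartheta$ is a polynomial and the gradient computation $\nabla_x \vartheta = \sum_{S \in \phi(x)} e_S(x)\nabla_x e_S = -\sum_{S \in \phi(x)} e_S(x)\eta^S = -\varphi(x)$ is routine. On the boundaries one uses that $t \mapsto (t^+)^2$ is $C^1$ on $\mathbb{R}$ with derivative $2t^+$, so each summand $x \mapsto (e_S(x)^+)^2$ is $C^1$ with gradient $-2 e_S(x)^+ \eta^S$, and $\vartheta$, being a finite sum of $C^1$ functions, is $C^1$ on all of $X$ with $\nabla\vartheta = -\varphi$ everywhere — in particular the coalitions with $e_S(x) = 0$ contribute nothing, which is exactly why the two definitions of $\vartheta$ via $\phi(x)$ and via $\mathcal{N}$ agree. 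Since the excerpt already asserts $\varphi = -\nabla\vartheta$ and that $\vartheta$ is the potential of $-\varphi$, I would treat this as established and simply invoke it; if a self-contained argument is wanted, the $C^1$ remark above is the whole content.
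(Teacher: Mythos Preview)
Your proof is correct and follows essentially the same route as the paper: invoke Proposition~\ref{prop: alt-def-core} for the first Lyapunov condition, then use $\varphi=-\nabla\vartheta$ to compute $\mathcal{L}_\varphi\vartheta(x)=-\lVert\varphi(x)\rVert^2<0$ for $x\notin C(v)$ via Proposition~\ref{prop: core-invariant}. Your additional remarks on the $C^1$ regularity of $\vartheta$ across region boundaries and on the closedness and invariance of $C(v)$ are not in the paper's proof but are welcome clarifications rather than a different approach.
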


\begin{proof}
Thanks to Proposition~\ref{prop: alt-def-core}, \(\vartheta\) already satisfies the first condition of Definition~\ref{def: lyapunov}. Let \(x \in X \setminus C(v)\) be a preimputation outside the core. We use the fact that the dissatisfaction field \(\vartheta\) is the opposite of the potential of \(\varphi\), to compute the Lie derivative of \(\vartheta\) at \(x\) along \(\varphi\), given by 
\[
\mathcal{L}_\varphi \vartheta(x) = \langle \nabla_x \vartheta, \varphi(x) \rangle = - \langle \varphi(x), \varphi(x) \rangle = - \lVert \varphi(x) \rVert^2 < 0. 
\]
Therefore, \(\vartheta\) is a Lyapunov function for \(\varphi\) at \(C(v)\), defined on the whole set \(X\). 
\end{proof}

Subsequently, the dissatisfaction of a given state \(x \in X\), or equivalently, the intensity of its instability, strictly decreases along the cohesion curve \(\gamma_x\). Therefore, there cannot exist any cycles in cohesion curves, which leads to the main result of the paper. 

\begin{theorem}\label{th: main-theorem}
For any balanced game, the core is the unique minimal global attractor of the cohesion flow. 
\end{theorem}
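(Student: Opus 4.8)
The plan is to verify each of the three defining conditions of an attractor for $Y = C(v)$, then establish minimality and globality. For condition~\eqref{item: invariant} (invariance), I would invoke Proposition~\ref{prop: core-invariant}: on $C(v)$ the cohesion field $\varphi$ vanishes, so for $y \in C(v)$ the constant curve $\gamma_y(t) = y$ solves the initial value problem, and by uniqueness (Theorem~\ref{th: unicity-cohesion-curve}) it is the only cohesion curve through $y$; hence $\gamma_y = \{y\} \subseteq C(v)$. Note also that $C(v)$ is nonempty (the game is balanced) and compact: it is closed as an intersection of closed half-spaces $\{x : e_S(x) \le 0\}$ intersected with the hyperplane $X$, and it is bounded because balancedness forces each $x(S) \ge v(S)$ together with $x(N) = 0$, pinning every coordinate between $v(\{i\})$ and $-\sum_{j \ne i} v(\{j\})$.

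For the Lyapunov stability condition~(2) and the omega-limit condition~(3), the engine is Lemma~\ref{lemma: dissatisfaction-lyapunov}: $\vartheta$ is a strict Lyapunov function for $\varphi$ at $C(v)$ on all of $X$, so $t \mapsto \vartheta(\gamma_x(t))$ is strictly decreasing whenever $\gamma_x(t) \notin C(v)$, with $\tfrac{d}{dt}\vartheta(\gamma_x(t)) = -\lVert \varphi(\gamma_x(t))\rVert^2$. I would first argue that every forward semi-orbit $\gamma_x^+$ is bounded: since $\vartheta$ is non-increasing along the flow, $\gamma_x^+$ stays in the sublevel set $\{\vartheta \le \vartheta(x)\}$, and this set is compact — $\vartheta$ is continuous and proper (coercive) on $X$ because as $\lVert x \rVert \to \infty$ some excess $e_S(x) \to +\infty$, again using balancedness to rule out directions in which all excesses stay bounded. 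Boundedness of $\gamma_x^+$ gives, via the standard LaSalle invariance argument, that $\Lambda^+(x)$ is nonempty, compact, invariant, and contained in the largest invariant subset of $\{x : \mathcal{L}_\varphi \vartheta(x) = 0\} = \{x : \varphi(x) = 0\} = C(v)$; so $\Lambda^+(x) \subseteq C(v)$ for \emph{every} $x \in X$, which simultaneously settles condition~(3) (for any $\delta$) and globality of the realm of attraction. For condition~(2), given $\varepsilon > 0$, I would pick $\delta$ small enough that $B_\delta(C(v))$ lies inside a sublevel set $\{\vartheta \le c\}$ with $c$ chosen so that $\{\vartheta \le c\} \subseteq B_\varepsilon(C(v))$ — such $c$ exists because $C(v) = \{\vartheta = 0\}$ is exactly the zero set, $\vartheta$ is continuous and proper, so the sublevel sets shrink down to $C(v)$; monotonicity of $\vartheta$ along the flow then keeps $\gamma_x^+$ inside $\{\vartheta \le c\} \subseteq B_\varepsilon(C(v))$.

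Finally, for minimality: suppose $A \subseteq C(v)$ is an attractor. Being an attractor it is invariant and nonempty, so it contains some $y$; but I claim any invariant nonempty subset of $C(v)$ whose realm of attraction has nonempty interior must be all of $C(v)$. More directly, by definition the realm of attraction of $A$ must be a neighbourhood of $A$, and every point of $C(v)$ is fixed, so each $z \in C(v)$ has $\Lambda^+(z) = \{z\}$; if $z$ lies in the realm of $A$ then $\{z\} = \Lambda^+(z) \subseteq A$. Thus it suffices to show every $z \in C(v)$ lies in the realm of $A$ — and this follows if $A$, being a compact attractor, has realm of attraction that is open (a standard fact: realms of attraction are open) and contains $A \subseteq$ a neighbourhood; combined with the fact that $C(v)$ is \emph{connected} (it is convex, being an intersection of half-spaces with a hyperplane) and the realm of $A$ is open and contains the nonempty relatively open set $A \cap C(v)$... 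Here the cleanest route is: since $A$ is an attractor its realm $\mathcal{R}$ is open; since $A \subseteq C(v)$ and all of $C(v)$ is fixed, $A$ must actually equal $C(v) \cap \mathcal{R}$, and because $C(v)$ is connected while $\mathcal{R}$ is open, if $A \subsetneq C(v)$ then $C(v) \setminus \mathcal{R} \ne \emptyset$ is also open in $C(v)$ (its points are fixed and not attracted, so a neighbourhood of each is not attracted), contradicting connectedness — hence $A = C(v)$, and $C(v)$ is the unique minimal attractor. The main obstacle I anticipate is making the LaSalle-type argument fully rigorous in this non-smooth ($C^1$-failing, merely piecewise-affine) setting: $\varphi$ is only Lipschitz, so I must be careful that $t \mapsto \vartheta(\gamma_x(t))$ is genuinely differentiable with the claimed derivative (it is, since $\vartheta$ is $C^1$ as a sum of squares of the $C^1$ functions $e_S(x)^+$... wait, $t^+$ is not $C^1$ — but $(t^+)^2$ \emph{is} $C^1$, with derivative $2t^+$, so $\vartheta$ is indeed $C^1$ and the chain rule applies), and that the LaSalle invariance principle is available for merely continuous flows generated by Lipschitz fields, which it is.
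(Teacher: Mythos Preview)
Your proof is correct but follows a different route from the paper. The paper proceeds by invoking a black-box characterization theorem of Bhatia and Szeg{\H o}: a compact set is a global attractor if and only if there exists a continuous, uniformly unbounded strict Lyapunov function on the whole space. Having already established that $\vartheta$ is a strict Lyapunov function (Lemma~\ref{lemma: dissatisfaction-lyapunov}), the paper only needs to check uniform unboundedness of $\vartheta$, which it does via $\varepsilon$-cores, and then reads off the conclusion. Your approach instead unpacks the attractor definition and proves each item by hand: forward invariance of sublevel sets of $\vartheta$, compactness of those sublevel sets (your coercivity argument is exactly the contrapositive of the paper's uniform-unboundedness lemma, and incidentally does \emph{not} require balancedness---the $\varepsilon$-cores are bounded for any game), and a LaSalle invariance principle to force $\Lambda^+(x)\subseteq\{\varphi=0\}=C(v)$. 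This buys you a self-contained argument that does not rely on the cited dynamical-systems literature, at the cost of more bookkeeping.

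For minimality, the paper is terse: it simply observes that since $\varphi$ vanishes on all of $C(v)$, no proper subset can be an attractor, and since $\varphi$ never vanishes off $C(v)$, every attractor must contain it. Your argument via connectedness of $C(v)$ and openness of realms of attraction reaches the same conclusion and is more explicit; note that the cleanest way to close it is to observe that $A = C(v)\cap\mathcal{R}$ is simultaneously closed (since $A$ is compact) and relatively open (since $\mathcal{R}$ is open) in the connected set $C(v)$, hence $A=C(v)$---your parenthetical about ``a neighbourhood of each is not attracted'' is not the right justification for openness of the complement. One small clean-up: your worry about regularity is well placed and correctly resolved---$\vartheta$ is genuinely $C^1$ because $(t^+)^2$ is, so the chain rule along $\gamma_x$ is legitimate even though $\varphi$ is only Lipschitz.
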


\begin{proof}
In Appendix~\ref{proof: main-theorem}. 
\end{proof}

We therefore have proved that, when the dynamics of the game is driven by the reduction of the dissatisfaction endured by the coalitions, the state of the game eventually reaches the core, from any possible initial condition. This is not the case when we consider a domination-based dynamics of the game. 


\section{Concluding remarks and perspectives}\label{sec: conclusion}

The dissatisfaction field defined in Section~\ref{sec: dissatisfaction} gives a numerical value to each possible state of the game, evaluating its level of instability. It turns out that the dissatisfaction field is compatible with the outvoting relation, in the sense that the dissatisfaction of a preimputation does not exceed the dissatisfaction of the preimputations it outvotes. Moreover, the dissatisfaction field characterizes the core when it is non-empty.

\medskip 

Together with this dissatisfaction field, we defined the cohesion field, which aggregates the interests of each aggrieved coalition into a unique vector for each state. The cohesion curve through a given state is precisely the trajectory followed by the state such that, at any time, the tangent to this trajectory is the direction defined by the cohesion field. The latter being the gradient of the dissatisfaction field, the trajectories defined by the cohesion curves are the most efficient paths to reduce the instability of the states. Furthermore, all the trajectories of any possible state converge to the core, and the core is the minimal set attracting all trajectories. 

\medskip 

A natural continuation to these investigations would be to study the cohesion field in the context of non-transferable utility games, where the set of preimputations is a more general manifold. Another interesting context is when the core is empty. Indeed, the definition of this vector does not rely on the non-emptiness of the core, and the potential attractors of this dynamical in this context could provide a convenient tool for the study of unbalanced games. 

\medskip 

Another extension of this work could be to look at different vector fields, for example \(\tilde{\varphi}\) defined from \(\varphi\) but distorted by some coefficients: 
\[
\tilde{\varphi}(x) = \sum_{S \in \phi(x)} a_S e_S(x) \eta^S. 
\]
In this setting, it could be possible to define a dynamical system converging towards the prenucleolus, where the coefficients \(a_S\) interact with the weights of balanced collections in a way resembling Kohlberg's criterion. Also, to have a closer connection to the von Neumann-Morgenstern stability, one could seek for a dynamical system with a vector field that always give an infinitesimal step that dominates the current state of the game, or that is determined by a bargaining process involving the frustrated coalitions. 

\medskip 

Finally, the cohesion flow could be converted into a gradient descent algorithm, to compute a core element of a general cooperative game without relying on some specific structure of the game. Such an algorithm could perhaps be used to check whether the core of a given game is non-empty.


\sloppy

\printbibliography{}


\appendix

\section{Proof of Proposition~\ref{prop: core-invariant}}\label{proof: core-invariant}

For this proof, we need results about \emph{unbalanced collections}, the dual counterparts of the balanced collections. 

\begin{definition}
	A collection of coalitions \(\mathcal{B} \subseteq \mathcal{N}\) is said to be \emph{balanced} if there exists a set of positive weights \(\{\lambda_S \mid S \in \mathcal{B}\}\) such that \(\sum_{S \in \mathcal{B}} \lambda_S \mathbf{1}^S = \mathbf{1}^N\). A collection \(\mathcal{C} \subseteq \mathcal{N}\) is said to be \emph{unbalanced} if it does not contain a balanced collection. 
\end{definition}

We can characterize geometrically the balanced collections using the vectors \(\eta^S\). 

\begin{proposition}\label{prop: balanced-zero}
	A collection of coalitions \(\mathcal{B} \subseteq \mathcal{N}\) is balanced if and only if there exists a set of positive numbers \(\{\theta_S \mid S \in \mathcal{B}\}\) such that \(\sum_{S \in \mathcal{B}} \theta_S \eta^S = 0_{\mathbb{R}^N}\). 
\end{proposition}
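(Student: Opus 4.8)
The plan is to work directly with the explicit expression $\eta^S = \mathbf{1}^S - \frac{s}{n}\mathbf{1}^N$ and to translate the defining equation of balancedness, stated in terms of the indicator vectors $\mathbf{1}^S$, into the claimed equation in terms of the $\eta^S$, and back. No deep machinery is needed; everything reduces to one linear substitution and the observation that the offending scalar coefficient that appears is exactly $n$ (resp.\ is strictly positive).

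For the \emph{only if} direction, I would start from a system of positive weights $\{\lambda_S \mid S \in \mathcal{B}\}$ with $\sum_{S\in\mathcal{B}}\lambda_S\mathbf{1}^S = \mathbf{1}^N$ and substitute: $\sum_{S\in\mathcal{B}}\lambda_S\eta^S = \sum_{S\in\mathcal{B}}\lambda_S\mathbf{1}^S - \frac{1}{n}\bigl(\sum_{S\in\mathcal{B}}\lambda_S s\bigr)\mathbf{1}^N$. The first term equals $\mathbf{1}^N$ by hypothesis; for the coefficient of $\mathbf{1}^N$ in the second term, I would note $\sum_{S\in\mathcal{B}}\lambda_S s = \langle \sum_{S\in\mathcal{B}}\lambda_S\mathbf{1}^S,\mathbf{1}^N\rangle = \langle\mathbf{1}^N,\mathbf{1}^N\rangle = n$, so the whole expression collapses to $0_{\mathbb{R}^N}$. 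Hence $\theta_S \coloneqq \lambda_S$ witnesses the right-hand condition.

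For the \emph{if} direction, suppose positive numbers $\{\theta_S\mid S\in\mathcal{B}\}$ satisfy $\sum_{S\in\mathcal{B}}\theta_S\eta^S = 0_{\mathbb{R}^N}$. Rearranging and using the formula for $\eta^S$ gives $\sum_{S\in\mathcal{B}}\theta_S\mathbf{1}^S = c\,\mathbf{1}^N$ with $c = \frac{1}{n}\sum_{S\in\mathcal{B}}\theta_S s$. Here $c>0$ because every $\theta_S>0$ and every $s = \lvert S\rvert \geq 1$ (this uses $\mathcal{B}\neq\emptyset$, which is no loss: the empty collection is excluded by convention, being neither balanced nor, interpreted strictly, admitting such a family). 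Dividing by $c$, the weights $\lambda_S \coloneqq \theta_S / c$ are positive and satisfy $\sum_{S\in\mathcal{B}}\lambda_S\mathbf{1}^S = \mathbf{1}^N$, so $\mathcal{B}$ is balanced.

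I do not expect a genuine obstacle. The only points requiring care are the identity $\sum_{S\in\mathcal{B}}\lambda_S s = n$ (immediate by pairing the balancing equation with $\mathbf{1}^N$), which makes the forward direction work, and the strict positivity of the rescaling constant $c$ in the converse, which guarantees the $\lambda_S$ are well-defined and positive; both are one-line verifications.
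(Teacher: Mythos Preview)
Your proposal is correct and follows essentially the same approach as the paper: the paper phrases the forward direction as ``apply the orthogonal projection \(p:\mathbb{R}^N\to X\) (which sends \(\mathbf{1}^S\) to \(\eta^S\)) to the balancing equation'' and the converse as ``\(\sum\theta_S\mathbf{1}^S\in\ker p\), hence equals \(\alpha\mathbf{1}^N\) with \(\alpha>0\)'', which is exactly your explicit substitution and rescaling written in projection language. Your explicit handling of the degenerate case \(\mathcal{B}=\emptyset\) is a point the paper glosses over.
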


\begin{proof}
Let \(\mathcal{B}\) be a balanced collection. Then, there exists a set of positive numbers \(\{\lambda_S \mid S \in \mathcal{B}\}\) such that \(\sum_{S \in \mathcal{B}} \lambda_S \mathbf{1}^S = \mathbf{1}^N\). Denote by \(p\) the orthogonal projection from \(\mathbb{R}^N\) onto \(X\). Recall that \(p(\mathbf{1}^S) = \eta^S\). Then, projecting the equality above yields
\[
\sum_{S \in \mathcal{B}} \lambda_S \eta^S = p \left( \mathbf{1}^N \right) = 0_{\mathbb{R}^N}. 
\]
Conversely, if there exists positive numbers such that \(\sum_{S \in \mathcal{B}} \theta_S \eta^S = 0_{\mathbb{R}^N}\), then we get that \( \sum_{S \in \mathcal{B}} \theta_S \in \mathrm{ker}(p)\), i.e., there exists \(\alpha \in \mathbb{R}\) such that 
\[
\sum_{S \in \mathcal{B}} \theta_S = \alpha \mathbf{1}^N. 
\]
Since all \(\theta_S\) are positive, then \(\alpha > 0\), and the result follows. 
\end{proof}

The connection between the unbalanced collections and the dynamical systems we are studying comes from the following result. 

\begin{proposition}[{\cite[Lemma~8.1]{grabisch2021characterization}}]\label{prop: feasible-unbalanced}
	Let \((N, v)\) be a balanced game and \(x \not \in C(v)\). Then \(\phi(x)\) is a non-empty unbalanced collection. 
\end{proposition}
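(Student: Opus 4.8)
The plan is to argue by contradiction, the only real input being that a balanced game has a non-empty core (Bondareva--Shapley). Non-emptiness of \(\phi(x)\) is immediate: since \(x \notin C(v)\), the definition of the core gives at least one coalition \(S\) with \(e_S(x) = v(S) - x(S) > 0\), i.e.\ \(S \in \phi(x)\).

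For the unbalancedness, suppose toward a contradiction that \(\phi(x)\) contains a balanced subcollection \(\mathcal{B} \subseteq \phi(x)\) with positive balancing weights \((\lambda_S)_{S \in \mathcal{B}}\), so that \(\sum_{S \in \mathcal{B}} \lambda_S \mathbf{1}^S = \mathbf{1}^N\). On the one hand, each \(S \in \mathcal{B}\) lies in \(\phi(x)\), hence \(e_S(x) > 0\), and each \(\lambda_S > 0\), so \(\sum_{S \in \mathcal{B}} \lambda_S e_S(x) > 0\). On the other hand, expanding the excesses and using \(x(S) = \langle x, \mathbf{1}^S\rangle\) together with linearity and \(x(N) = 0\) (as \(x \in X\)),
\[
\sum_{S \in \mathcal{B}} \lambda_S e_S(x) \;=\; \sum_{S \in \mathcal{B}} \lambda_S v(S) - \Big\langle x, \sum_{S \in \mathcal{B}} \lambda_S \mathbf{1}^S \Big\rangle \;=\; \sum_{S \in \mathcal{B}} \lambda_S v(S) - x(N) \;=\; \sum_{S \in \mathcal{B}} \lambda_S v(S),
\]
so that \(\sum_{S \in \mathcal{B}} \lambda_S v(S) > 0\). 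But since \((N,v)\) is balanced we may pick \(y \in C(v)\), and then \(0 = v(N) = y(N) = \big\langle y, \sum_{S}\lambda_S \mathbf{1}^S\big\rangle = \sum_{S} \lambda_S y(S) \ge \sum_S \lambda_S v(S)\), a contradiction. Hence no balanced collection is contained in \(\phi(x)\), i.e.\ \(\phi(x)\) is unbalanced.

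The computation is short; the points that need care are tracking the normalization \(v(N) = x(N) = 0\), using the balancedness inequality in the right direction, and observing that it is exactly the combination of strict positivity of the excesses on \(\phi(x)\) with strict positivity of the balancing weights that forces the contradiction — so the analogous statement would fail for the weak variant of \(\phi\) defined by \(e_S(x) \ge 0\). As an alternative I could phrase everything through the vectors \(\eta^S\) via Proposition~\ref{prop: balanced-zero}: a balanced \(\mathcal{B} \subseteq \phi(x)\) would yield positive \(\theta_S\) with \(\sum_{S\in\mathcal{B}} \theta_S \eta^S = 0\), whence \(0 = \langle x, \sum_{S} \theta_S \eta^S\rangle = \sum_S \theta_S x(S) < \sum_S \theta_S v(S)\), while projecting \(\sum_S \theta_S \eta^S = 0\) back to \(\mathbb{R}^N\) shows \((\theta_S)\) is proportional to a balancing family, so \(\sum_S \theta_S v(S) \le 0\) by core non-emptiness; I expect no essential obstacle on either route.
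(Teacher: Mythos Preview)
Your argument is correct: the contradiction via \(\sum_{S\in\mathcal{B}}\lambda_S e_S(x)>0\) together with \(\sum_{S\in\mathcal{B}}\lambda_S v(S)\le v(N)=0\) (either from Bondareva--Shapley directly or, as you do, from any core element \(y\)) is exactly the standard proof. The paper does not supply its own proof of this proposition; it simply imports the statement from Grabisch and Sudh{\"o}lter, so there is nothing to compare against beyond noting that your write-up is self-contained and sound.
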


Thanks to this result, we can prove Proposition~\ref{prop: core-invariant}. 

\begin{proof}[Proof of Proposition~\ref{prop: core-invariant}]
When a preimputation \(x \in X\) is outside the core, we have \(\phi(x) \neq \emptyset\). By Proposition~\ref{prop: feasible-unbalanced}, we have that \(\phi(x)\) is unbalanced. By definition of \(( \cdot )^+ \) and Proposition~\ref{prop: balanced-zero}, we have that 
\[
\varphi(x) = \sum_{S \in \mathcal{N}} e_S(x)^+ \eta^S \neq 0, 
\]
and therefore \(C(v) \supseteq \{x \in X \mid \varphi(x) = 0\}\). However, if \(x\) belongs to the core, the sum above is empty, then, by double inclusion, \(C(v) = \{x \in X \mid \varphi(x) = 0\}\). 
\end{proof}


\section{Proof of Theorem~\ref{th: unicity-cohesion-curve}}\label{proof: unicity-cohesion-curve}

To prove this theorem, we use the global Cauchy-Lipschitz theorem. 

\begin{definition}
Let \(E\) be an open set of \(\mathbb{R}^n\). We say that the function \(f: E \to \mathbb{R}^n\) is \emph{\(k\)-Lipschitz continuous} on a subset \(W\) of \(E\) if there exists a constant \(k\) such that, for all \((x_1, x_2) \in W^2\), we have 
\[
\lVert f(x_1) - f(x_2) \rVert \leq k \lVert x_1 - x_2 \rVert. 
\]
\end{definition}

\begin{theorem}[Global Cauchy-Lipschitz Theorem for autonomous vector field]\label{th: cauchy-lipschitz}
Let \(d\) be a positive integer, and let \(f: \mathbb{R}^d \to \mathbb{R}^N\) be a Lipschitz continuous vector field on \(\mathbb{R}^d\). Then, for all \(y \in \mathbb{R}^d\), the initial value problem 
\[
\partial_t c_y(t) = f(c_y(t)), \qquad c_y(0) = y, 
\]
has a unique solution \(c_y\) defined on \(\mathbb{R}\) which is continuously differentiable. 
\end{theorem}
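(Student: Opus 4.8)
The plan is to reformulate the initial value problem as a fixed-point equation and apply the Banach fixed-point theorem, with the refinement that a cleverly weighted norm delivers the solution on all of $\mathbb{R}$ at once rather than only locally. First I would integrate the equation to pass to the equivalent integral formulation: a continuously differentiable curve solves the problem if and only if it is a continuous map $c_y$ satisfying
\[
c_y(t) = y + \int_0^t f(c_y(s)) \, ds \qquad \text{for all } t .
\]
Defining the Picard operator $(P g)(t) := y + \int_0^t f(g(s)) \, ds$ on the space of continuous maps from a time interval into $\mathbb{R}^d$, the solutions correspond exactly to the fixed points of $P$.

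Next I would fix an arbitrary horizon $a > 0$ and work on the Banach space $C([-a,a], \mathbb{R}^d)$, equipped not with the supremum norm but with the exponentially weighted (Bielecki) norm $\lVert g \rVert_\lambda := \sup_{|t| \le a} e^{-\lambda |t|} \lVert g(t) \rVert$ for a parameter $\lambda > k$, where $k$ is the Lipschitz constant of $f$. Since this norm is equivalent to the supremum norm, the space stays complete. Inserting the $k$-Lipschitz bound on $f$ into the integral and factoring the weight through, a direct estimate yields, for $t \ge 0$,
\[
e^{-\lambda t} \lVert (P g_1)(t) - (P g_2)(t) \rVert \le \tfrac{k}{\lambda} \lVert g_1 - g_2 \rVert_\lambda ,
\]
with the symmetric bound for $t < 0$. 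Hence $P$ is a contraction of ratio $k/\lambda < 1$, and Banach's theorem produces a unique fixed point on $[-a,a]$.

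Since the horizon $a$ is arbitrary and any solution on a larger interval restricts to a solution on a smaller one, uniqueness lets me glue the curves obtained for increasing horizons into a single map $c_y : \mathbb{R} \to \mathbb{R}^d$, unique on all of $\mathbb{R}$. The $C^1$ regularity then falls out of the integral equation: $c_y$ is continuous and $f$ is continuous (being Lipschitz), so $s \mapsto f(c_y(s))$ is continuous, and the fundamental theorem of calculus shows that $c_y$ is differentiable with $\partial_t c_y(t) = f(c_y(t))$, itself a continuous function.

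The step I expect to be the main obstacle is securing global rather than merely local existence. Under the plain supremum norm, $P$ is a contraction only when $ka < 1$, so one would obtain only short-time solutions and would then have to extend them to all of $\mathbb{R}$ by a separate argument ruling out finite-time blow-up, typically a Gr\"onwall estimate exploiting the linear growth forced by the global Lipschitz bound. The weighted-norm device is precisely what circumvents this: the exponential weight absorbs the interval length, making the contraction ratio $k/\lambda$ independent of $a$. I would therefore concentrate the care on that estimate and on verifying the equivalence of the two norms, which is what keeps the fixed-point space complete and makes the single-stroke global argument legitimate.
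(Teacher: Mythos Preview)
Your argument is correct and is one of the standard proofs of the global Cauchy--Lipschitz theorem: pass to the integral equation, use the Bielecki weighted norm to make the Picard operator a contraction on $C([-a,a],\mathbb{R}^d)$ with ratio $k/\lambda$ independent of $a$, take the unique fixed point, and glue over increasing horizons. The contraction estimate, the completeness via norm equivalence, and the $C^1$ conclusion from the integral formulation are all fine.

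However, there is nothing to compare against: the paper does not prove this theorem. It is quoted as a classical black box (stated without proof in the appendix) and then applied to the cohesion field after checking that $\varphi$ is globally Lipschitz. So you have supplied a genuine proof where the paper simply cites the result. If anything, your write-up goes beyond what the paper needs; the only point worth tightening is the $t<0$ half of the contraction bound, where you should make explicit that $\bigl\lvert \int_0^t e^{\lambda|s|}\,ds \bigr\rvert \le \lambda^{-1} e^{\lambda|t|}$ so the same ratio $k/\lambda$ governs both sides.
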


\begin{proof}[Proof of Theorem~\ref{th: unicity-cohesion-curve}]
 In order to apply Theorem~\ref{th: cauchy-lipschitz} to the cohesion flow \(\varphi\), we need to check that it is Lipschitz continuous on the whole linear subspace of preimputations. Let us have a closer look at the map \(\phi: x \mapsto 2^\mathcal{N}\). Its codomain is finite, so there are only a finite number of preimages of \(\phi\). On each of these preimages, the cohesion field is affine, hence, have a constant gradient. Taking the maximum of the norms of the gradients on each preimage of \(\phi\) gives the Lipschitz constant of \(\varphi\), which is necessarily finite as \(2^\mathcal{N}\) is finite. The application of Theorem~\ref{th: cauchy-lipschitz} on \(\varphi\) finishes the proof. 
\end{proof}


\section{Proof of Theorem~\ref{th: main-theorem}}\label{proof: main-theorem}

For this proof, we need some technical properties about the dissatisfaction field. 

\begin{definition}[{\cite[Definition~3.4]{bhatia1967asymptotic}}]
	A continuous scalar function \(V\) defined on a set \(N \subseteq X\) is said to be \emph{uniformly unbounded} on \(N\) if given any \(\alpha > 0\), there exists a compact set \(K \subsetneq N\), such that \(V(x) \geq \alpha\) for \(x \not \in K\). 
\end{definition}

\begin{lemma}\label{lemma: uniformly-unbounded}
	The dissatisfaction field is uniformly unbounded. 
\end{lemma}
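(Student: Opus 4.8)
The statement is equivalent to the \emph{coercivity} of $\vartheta$, namely that $\vartheta(x) \to \infty$ as $\lVert x \rVert \to \infty$ in $X$. Granting this, for a given $\alpha > 0$ pick $R$ so large that $\vartheta(x) \ge \alpha$ whenever $\lVert x \rVert > R$, and set $K = \{x \in X \mid \lVert x \rVert \le R\}$. Then $K$ is closed and bounded in the finite-dimensional space $X$, hence compact, and it is a proper subset of $X$ since $\dim X = n-1 \ge 1$ (we take $n \ge 2$, the case $n=1$ being degenerate). By construction $\vartheta(x) \ge \alpha$ for $x \notin K$, which is exactly uniform unboundedness. So it suffices to establish coercivity.

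To prove coercivity the plan is to exhibit, for each $x$ of large norm, a single coalition whose excess grows linearly in $\lVert x \rVert$. The cleanest route is by contradiction and normalization. Suppose there are $\alpha > 0$ and a sequence $(x_k)$ in $X$ with $\lVert x_k \rVert \to \infty$ and $\vartheta(x_k) < \alpha$ for all $k$. Put $u_k = x_k / \lVert x_k \rVert$; by compactness of the unit sphere of $X$, pass to a subsequence with $u_k \to u$, where $\lVert u \rVert = 1$ and $u(N) = 0$. Since $u \ne 0$ and its coordinates sum to zero, there is a player $i \in N$ with $u_i < 0$. Then
\[
e_{\{i\}}(x_k) = v(\{i\}) - (x_k)_i = v(\{i\}) - \lVert x_k \rVert (u_k)_i,
\]
and this tends to $+\infty$ as $k \to \infty$, because $(u_k)_i \to u_i < 0$ while $\lVert x_k \rVert \to \infty$ and $v(\{i\})$ is a constant. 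Consequently $\vartheta(x_k) \ge \tfrac12\big(e_{\{i\}}(x_k)^+\big)^2 \to \infty$, contradicting $\vartheta(x_k) < \alpha$. Hence $\vartheta$ is coercive and the lemma follows.

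The only genuine obstacle is recognizing that coercivity does hold even though $\vartheta$ accumulates only the \emph{positive} parts of the excesses: a priori one might fear that $x$ could escape to infinity along a direction in $X$ keeping every excess nonpositive. No such direction exists, precisely because the constraint $x(N) = 0$ prevents all coordinates of $x$ from staying bounded above as $\lVert x \rVert \to \infty$ --- concretely, if $j$ attains $\max_i x_i \ge 0$ then $\sum_{i \ne j} x_i = -x_j$ forces some coordinate to be at most $-x_j/(n-1)$, so in all cases $-\min_i x_i \ge \lVert x \rVert/\big((n-1)\sqrt{n}\,\big)$; taking the singleton at a minimizing player then even yields the explicit bound $\vartheta(x) \ge c\,\lVert x \rVert^2$ for $\lVert x \rVert$ large, with $c>0$ depending only on $n$ and $\max_i \lvert v(\{i\})\rvert$. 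Either way, note that balancedness of the game is never used: this lemma holds for an arbitrary coalition function.
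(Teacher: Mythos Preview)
Your proof is correct and takes a genuinely different route from the paper. The paper does not argue via coercivity of $\vartheta$; instead it picks, for each threshold $\alpha>0$, the $\varepsilon$-core $C_\varepsilon(v)=\{x\in X\mid e_S(x)\le\varepsilon\ \text{for all }S\neq N\}$ with $\varepsilon=\sqrt{2\alpha}$ as the compact witness set $K$, and simply observes that any $x\notin C_\varepsilon(v)$ has some coalition with excess exceeding $\varepsilon$, whence $\vartheta(x)>\tfrac12\varepsilon^2=\alpha$. Compactness of $\varepsilon$-cores is quoted from the literature rather than proved. Your argument is more self-contained: by normalizing and passing to a limit direction on the unit sphere of $X$, you locate a single \emph{singleton} coalition whose excess blows up, which yields coercivity directly without invoking the $\varepsilon$-core machinery (and in fact, boundedness of the $\varepsilon$-core is essentially equivalent to the coercivity you establish). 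Your explicit quantitative bound $-\min_i x_i\ge \lVert x\rVert/((n-1)\sqrt n)$ even gives a concrete quadratic lower growth of $\vartheta$, which the paper's proof does not. Your closing remark that balancedness plays no role is also a worthwhile observation not made in the paper.
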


\begin{proof}[Proof of Lemma~\ref{lemma: uniformly-unbounded}]
	We prove this result by choosing the compact sets \(K\) to be \(\varepsilon\)-cores. For any real number \(\varepsilon\), the \(\varepsilon\)-core \cite[Definition~4.1]{driessen2013cooperative} of a game \((N, v)\) is defined by 
\[
C_\varepsilon(v) = \{x \in X \mid x(S) \geq v(S) - \varepsilon, \text{ for all } S \in \mathcal{N} \setminus \{N\}\}. 
\]
For any choice of \(\varepsilon\), the \(\varepsilon\)-cores are compact, convex subsets of \(X\)~\cite[p.~22]{driessen2013cooperative}. For each preimputation \(x \not \in C_\varepsilon(v)\), there exists a coalition \(S\) such that \(e_S(x) > \varepsilon\), hence \(S \in \phi(x)\) and \( \vartheta(x) > \frac{1}{2} \varepsilon^2\). Consequently, for any \(\alpha > 0\), if \(x \not \in C_{\sqrt{2\alpha}}\), then \(\vartheta(x) > \alpha\), therefore \(\vartheta\) is uniformly unbounded. 
\end{proof}

Uniform unboundedness allows us to identify global attractor as follows.  

\begin{theorem}[{\cite[Theorem~3.10]{bhatia1967asymptotic}}]\label{th: characterization-global-attractor}
	A compact set \(M \subseteq X\) is a global attractor if and only if there exists a continuous uniformly unbounded strict Lyapunov function on \(X\). 
\end{theorem}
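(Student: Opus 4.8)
The plan is to prove the two implications separately, treating the sufficiency direction (existence of $V$ implies $M$ is a global attractor) in full, since it is exactly what powers Theorem~\ref{th: main-theorem}, and treating the necessity direction as a converse Lyapunov construction. Throughout I assume solutions are complete, which holds here by global existence. For sufficiency, suppose $V : X \to \mathbb{R}$ is continuous, uniformly unbounded, and a strict Lyapunov function for $f$ at $M$; note that Definition~\ref{def: lyapunov} already packages the invariance (and nonemptiness) of $M$ into this hypothesis and forces $M = V^{-1}(0)$. The first step is to record that uniform unboundedness makes every sublevel set $V^{-1}([0,c])$ compact: it is closed by continuity, and applying the definition with $\alpha = c+1$ gives a compact $K$ with $V \geq c+1$ off $K$, so $V^{-1}([0,c]) \subseteq K$. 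Since $\mathcal{L}_f V < 0$ off $M$, the map $t \mapsto V(\gamma_x(t))$ is non-increasing, so each forward orbit stays in the compact set $V^{-1}([0, V(x)])$. For Lyapunov stability I fix $\varepsilon > 0$ and set $m = \inf\{V(y) : \rho(M, y) = \varepsilon\}$; uniform unboundedness confines the infimum to a compact slice on which $V$ is positive (being disjoint from $M$), so $m > 0$. Continuity of $V$ with $V|_M = 0$ then lets me pick $\delta$ with $V < m$ on $B_\delta(M)$, and a forward orbit from $B_\delta(M)$ cannot reach $\{\rho(M, \cdot) = \varepsilon\}$ without raising $V$ above $m$, contradicting monotonicity; hence $\gamma_x^+ \subseteq B_\varepsilon(M)$.

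For global attraction I would run a LaSalle argument. For any $x \in X$ the forward orbit is precompact, so $\Lambda^+(x)$ is nonempty and compact, and it is invariant by the standard theory of omega-limit sets of complete autonomous flows. Since $V(\gamma_x(t))$ is non-increasing and bounded below by $0$, it converges to some $c \geq 0$, and continuity forces $V \equiv c$ on $\Lambda^+(x)$. If $c > 0$ then $\Lambda^+(x) \cap M = \emptyset$, so along the orbit of any $p \in \Lambda^+(x)$ the function $V$ is simultaneously constant (by invariance of $\Lambda^+(x)$) and strictly decreasing (by $\mathcal{L}_f V < 0$), a contradiction. Hence $c = 0$ and $\Lambda^+(x) \subseteq V^{-1}(0) = M$. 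Combined with stability and invariance, this shows $M$ is an attractor whose realm is all of $X$, i.e., a global attractor.

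For necessity, assume $M$ is a global attractor and construct a Lyapunov function. The natural candidate is $w(x) = \sup_{t \geq 0} \rho(M, \gamma_x(t))$: stability makes $w$ finite and continuous with $w \to 0$ near $M$, invariance gives $w^{-1}(0) = M$, and $w$ is non-increasing along orbits because the supremum is taken over a shrinking tail. Since $M$ is compact and $X$ is finite-dimensional, $w(x) \geq \rho(M, x) \to \infty$ as $\lVert x \rVert \to \infty$, which yields uniform unboundedness. The defect is that $w$ is in general only continuous and only non-strictly decreasing. I would upgrade it by a Massera/Kurzweil-type smoothing: after a time reparametrization ensuring orbits approach $M$ fast enough on compacta, set $V(x) = \int_0^\infty a(w(\gamma_x(t)))\, dt$ with $a$ positive, strictly increasing, and decaying rapidly enough that the integral converges and is continuously differentiable. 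Then $\frac{d}{dt} V(\gamma_x(t)) = -a(w(\gamma_x(t))) < 0$ off $M$, giving strict decrease, while bounding $V$ below by an increasing function of $\rho(M, x)$ preserves uniform unboundedness.

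The main obstacle is this necessity direction: producing a single function that is at once continuously differentiable, strictly decreasing off $M$, and uniformly unbounded. The raw candidate $w$ fails both the differentiability and the strictness requirements, and the smoothing step demands delicate control of integrability — typically obtained by reparametrizing time so that convergence to $M$ is uniform on compact sets — carried out in a way that does not destroy the growth at infinity needed for uniform unboundedness. The sufficiency direction, by contrast, is a routine LaSalle argument and is essentially automatic once compactness of the sublevel sets is extracted from uniform unboundedness.
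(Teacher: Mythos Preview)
The paper does not prove Theorem~\ref{th: characterization-global-attractor} at all: it is quoted verbatim from \cite[Theorem~3.10]{bhatia1967asymptotic} and used as a black box in the proof of Theorem~\ref{th: main-theorem}. There is therefore no ``paper's own proof'' to compare against; you have supplied an argument where the paper simply appeals to the literature.

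On the content of your attempt: the sufficiency direction is the only one the paper actually needs, and your LaSalle-type argument is correct and complete. Compactness of sublevel sets from uniform unboundedness, monotone decrease of $V$ along orbits, the $m = \inf\{V(y) : \rho(M,y) = \varepsilon\} > 0$ device for stability, and the invariance-plus-constancy contradiction on $\Lambda^+(x)$ are exactly the standard ingredients and are put together without gaps. For necessity you have given an honest sketch of the Massera--Kurzweil converse construction and correctly flagged the genuine technical work (simultaneous smoothness, strict decrease, and growth at infinity). That is the right outline, though in the paper's context this direction is never invoked, so leaving it at the level of a sketch is harmless.
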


We now have all the tools to prove Theorem~\ref{th: main-theorem}. 

\begin{proof}[Proof of Theorem~\ref{th: main-theorem}]
	By definition, the core is closed and by Proposition~\ref{prop: core-invariant}, the core is invariant. By Lemma~\ref{lemma: dissatisfaction-lyapunov}, we have that the dissatisfaction field is a strict Lyapunov function for \(\varphi\) at \(C(v)\) and by Lemma~\ref{lemma: uniformly-unbounded}, we have that the dissatisfaction field is uniformly unbounded. By Theorem~\ref{th: characterization-global-attractor}, the core is a global attractor of the cohesion flow. Moreover, because the cohesion field vanishes on the core, no subset of it can be an attractor. Furthermore, because the cohesion field is never zero outside of the core, each attractor must contain the core. So, the core is the unique minimal global attractor of the cohesion flow. 
\end{proof}

\end{document}